\newcommand{\ignore}[1]{} 
\newcounter{SaveEqnCntr}
\newcommand{\be}{\begin{equation}}
\newcommand{\ee}{\end{equation}}
\newcommand{\ba}{\begin{eqnarray}}
\newcommand{\ea}{\end{eqnarray}}
\def\>{\rangle}
\def\<{\langle}
\newcommand{\Tr}{\text{Tr}}
\newtheorem{theorem}{Theorem}
\newtheorem{corollary}{Corollary}
\newtheorem{definition}{Definition}
\newtheorem{example}{Example}
\begin{document}
	
\title{Probing Kirkwood-Dirac nonpositivity and its operational implications via moments}	

\author{Sudip Chakrabarty}
\email{sudip27042000@gmail.com}
\affiliation{S. N. Bose National Centre for Basic Sciences, Block JD, Sector III, Salt Lake, Kolkata 700 106, India}

\author{Bivas Mallick}
\email{bivasqic@gmail.com}
\affiliation{S. N. Bose National Centre for Basic Sciences, Block JD, Sector III, Salt Lake, Kolkata 700 106, India}

\author{Saheli Mukherjee}
\email{mukherjeesaheli95@gmail.com}
\affiliation{S. N. Bose National Centre for Basic Sciences, Block JD, Sector III, Salt Lake, Kolkata 700 106, India}

\author{Ananda G. Maity}
\email{anandamaity289@gmail.com}
\affiliation{School of Physical Sciences, Indian Institute of Technology Goa, Ponda 403401, Goa, India}
	
\begin{abstract}
The Kirkwood-Dirac (KD) distribution has recently emerged as a powerful quasiprobability framework with wide-ranging applications in quantum information processing tasks. In this work, we introduce an experimentally motivated criterion for detecting nonclassical signatures of the KD distribution using its statistical moments and demonstrate its effectiveness through explicit examples. We further show that this approach extends naturally to identify other quantum resources, such as quantum coherence and nonclassical extractable work, that are intrinsically connected to the KD distribution. Finally, we propose a methodology to realize these moments using shadow tomography technique. Our criteria involves the evaluation of simple functionals, making it well-suited for efficient experimental implementation.
\end{abstract}
\maketitle
\section{Introduction}
Nonclassicality lies at the heart of quantum mechanics, encapsulating features of quantum systems that defy classical description. Some of the promising nonclassical features for example, superposition \cite{szabo2003quantum}, entanglement \cite{horodecki2009quantum}, nonlocality \cite{RevModPhys.86.419}, and contextuality \cite{PhysRevA.71.052108,budroni2022kochen} underpin the distinct capabilities of quantum systems in areas such as computation, communication, and precision measurement. Yet, a primary question in the study of nonclassicality is to identify and characterize the fundamental differences in how quantum state preparation, evolution, and measurement are represented and interpreted, particularly in ways that reveal their deviation from classical probabilistic frameworks \cite{lee1991measure,mari2011directly,vogel2014unified,innocenti2022nonclassicality,mallick2025efficient}.

A compelling strategy for capturing nonclassical aspects of quantum systems is through the use of quasiprobability representations---mathematical frameworks that extend classical probability theory into the quantum domain \cite{cahill1969density,ferrie2011quasi}. However, a fundamental distinction between these quasiprobability distributions and the classical probability distribution is that they can take negative or even complex values \cite{veitch2012negative,tan2020negativity}. Prominent examples of such distributions that violate certain Kolmogorov's axioms of joint probability function but reflect essential features of quantum behavior include the Wigner distribution \cite{wigner1932quantum,kenfack2004negativity}, the Glauber-Sudarshan P distribution \cite{sudarshan1963equivalence,glauber1963coherent}, and the Husimi Q distribution \cite{husimi1940some}.

Such nonpositivity in quasiprobability distributions often stems from the incompatibility of observables, which precludes assigning joint probabilities to their outcomes. A classic example is the phase-space formulation: while classical systems admit consistent joint probability distributions over phase-space, quantum states admit a Wigner function representation which can take negative values, indicating the absence of a classical probabilistic interpretation \cite{hudson1974wigner}. Recent developments have shown that quantum states exhibiting Wigner negativity can provide a significant advantage in various quantum information processing tasks \cite{adhikari2008teleportation,adhikari2008broadcasting,spekkens2008negativity,niset2009no}. Consequently, Wigner negativity is not merely a signature of nonclassicality but also constitutes a crucial resource for realizing quantum computational advantage \cite{mari2012positive}.

On the other hand, recent developments in quantum information processing mainly focus on discrete-variable systems and a broad range of observables beyond just position and momentum. The well-known Wigner function is however formulated with respect to a fixed pair of conjugate observables---position and momentum. Later, though discrete versions of the Wigner function have been introduced \cite{bjork2008discrete}, they still rely on two fixed, maximally noncommuting observables. To overcome this issue and enable more general measurement scenarios, the Kirkwood-Dirac (KD) distribution has emerged as a more flexible and broadly applicable quasiprobability framework \cite{Budiyono_2023_correlation,budiyono_2024coherence,arvidsson2024properties, Lostaglio2023kirkwooddirac, Budiyono_2024_separation, budiyono_2025_entanglement, burkat_2025_structure}. The KD distribution was introduced independently by Kirkwood in 1933 in the context of thermodynamic partition functions \cite{kirkwood1933quantum}, and later rediscovered by Dirac in 1945, who emphasized its role in bridging classical and quantum mechanics \cite{dirac1945analogy}. Dirac argued that the key distinction between the two lies in the noncommutativity of observables and showed that the KD distribution enables the calculation of expectation values within a quasiprobabilistic framework.

The real part of the KD distribution, rediscovered as the Margenau-Hill quasiprobability (MHQ) distribution in 1961, regained significance with the rise of quantum information theory due to its broad applicability \cite{Margenau61}. Nonpositivity of the KD distribution has been identified as a stronger marker of nonclassicality than noncommutativity, with implications in contextuality \cite{pusey2014anomalous,kunjwal2019anomalous}, measurement disturbance \cite{dressel2012significance,monroe2021weak}, violation of Leggett-Garg inequalities \cite{dressel2011experimental}, etc. In quantum metrology, non-real KD quasiprobabilities enable extracting information beyond classical limits, as seen in weak-value amplification, which boosts the signal-to-noise ratios \cite{ArvidssonShukur2020}. Moreover, KD quasiprobabilities offer an efficient alternative to the tomographic reconstruction of quantum states. The MHQ distribution has been seen to be promising in extending fluctuation theorems in quantum thermodynamics, providing insights into heat-flow anomalies and their connection to nonclassicality \cite{PRXQuantum.5.030201}. Recently, quasiprobability-based approaches have been applied to study work statistics in models such as the quantum Ising chain \cite{gonzalez2019out}.

Given the growing relevance of the KD distribution in various physical contexts, a natural next step is to rigorously characterize and detect its nonclassical features before employing it in information processing or other physically motivated applications. KD nonclassicality is commonly associated with the appearance of negative or imaginary values in the quasiprobability distribution. To quantify this behavior, Alonso {\it et al.} proposed a dedicated measure in \cite{gonzalez2019out} which has been widely studied and applied to detect deviations from classical probability distributions \cite{arvidsson2021conditions,PhysRevLett.127.190404,de2023relating}. Moving beyond this conventional approach, the authors in \cite{PhysRevA.109.012215} introduced a more refined measure of KD nonclassicality based on the quantum modification terms in the KD decomposition proposed by Johansen in \cite{PhysRevA.76.012119}. Other common ways of detection include the weak two-point measurement   (WTPM) \cite{PhysRevA.76.012119}, an interferometric approach to reconstruct the KD distribution \cite{PhysRevLett.110.230602, PhysRevLett.110.230601}, direct reconstruction schemes \cite{lundeen2011direct,PhysRevLett.108.070402}, cloning-based approaches \cite{buscemi2013direct}, witness-based detection schemes \cite{langrenez2024convex}, {\it etc.} Some of these detection schemes have also been implemented in experiments \cite{PhysRevLett.113.140601,PhysRevLett.119.050405,wagner2024quantum,hernandez2024interferometry}. But, the direct schemes involve reconstructing the quasiprobability distribution, which is costly in terms of the resources consumed. Cloning-based approaches are constrained by the no-cloning theorem, and hence they require prior information about the state.

Motivated by the diverse aspects and applications of the KD distribution, in this work, we propose a practical method to detect the nonpositivity of the KD distribution using moment-based criteria, which avoid the need for full process tomography and are thus well-suited for experimental implementation. Nevertheless, as a direct operational consequence, our detection framework also enables the identification of coherence and thermodynamic quantities such as nonclassical extractable work. Specifically, we define the $n$th-order moments of the KD distribution, and show that they can be accessed in real experiments through simple functionals using shadow tomography. Such moment-based techniques have been effectively applied in detecting quantum resources such as entanglement \cite{PhysRevLett.125.200501}, higher Schmidt number \cite{mallick2025higher}, many-body correlations \cite{PhysRevLett.109.130502}, genuine multipartite entanglement \cite{mukherjee2025efficient}, non-Markovianity \cite{PhysRevA.109.022247}, and Wigner negativity \cite{mallick2025efficient}. Moreover, unlike witness-based approaches \cite{langrenez2024convex}, this criterion is state-independent and requires no prior knowledge of the quantum state. We demonstrate its effectiveness through examples involving nonpositive KD distributions.

The rest of the paper is organized as follows. In the next section (Sec.~\ref{s2}), we introduce the detailed overview of the KD distribution, namely its definition and mathematical structure. This is followed by an introduction to the moment criteria and its applications in entanglement theory. In Sec.~\ref{s3}, we propose our criteria for KD nonpositivity detection based on moments. We also provide two explicit examples in support of our criteria. We further demonstrate the operational relevance of our criteria in the contexts of detecting quantum coherence and nonclassical extractable work in Sec.~\ref{s4}. A proposal for evaluating the KD moments using shadow tomography is presented in Sec. \ref{s5}. Finally, in Sec.~\ref{s6}, we summarize our results along with some potential avenues for future research.

\section{Preliminaries}\label{s2}
\subsection{KD Distribution}\label{s2A}

Let us consider a finite-dimensional Hilbert space of dimension $d$, on which all operators are assumed to act. Let ${\ket{a_i}}$ and ${\ket{f_j}}$ denote two orthonormal bases of this space. In what follows, these bases correspond to the eigenstates of the observables ${A} = \sum_i a_i \ket{a_i}\bra{a_i}$ and ${F} = \sum_j f_j \ket{f_j}\bra{f_j}$, respectively. Given a quantum state $\rho$, one may define its KD distribution, $Q(\rho)$, with elements given by:
\begin{equation} \label{Eq:KD}
Q_{ij}(\rho) \equiv \langle f_j | a_i \rangle \bra{a_i} \rho \ket{f_j} = \mathrm{Tr}(\Pi^f_j \Pi^a_i \rho),
\end{equation}
where $\Pi^a_i = \ket{a_i}\bra{a_i}$ and analogously for $\Pi^f_j$. This distribution can be used to calculate expectation values and measurement-outcome probabilities. The elements
$\{ Q_{ij} \}$ satisfies some of the Kolmogorov's axioms for joint probability distributions \cite{Kolmogorov51}:
\begin{align}
\sum_{ij} Q_{ij}  = 1 , \; \sum_{j} Q_{ij}  = p(a_i|\rho) , \; \mathrm{and} \;
\sum_{i} Q_{ij}  = p(f_j|\rho) , \nonumber
\end{align}
where $p(a_i|\rho)$ and $p(f_j|\rho)$ denote the measurement probabilities associated with the bases ${\ket{a_i}}$ and ${\ket{f_j}}$, respectively.

In the special case where the bases coincide, {\it i.e.}, ${\ket{a_i}} = {\ket{f_j}}$, the KD distribution reduces to a classical probability distribution:
\begin{align*}
\{Q_{ij}(\rho)\} &= \{\langle f_j | a_i \rangle \bra{a_i} \rho \ket{f_j} \delta_{f_j, a_i} \} = \{ \mathrm{Tr}(\Pi^a_i \rho) \delta_{f_j, a_i} \}.
\end{align*}

When all observables commute the KD distribution reduces to a valid probability distribution. In contrast, for general quantum systems, $Q_{ij}(\rho)$ can take negative or even complex values, indicating nonclassical behavior.
Certain physical processes \cite{Yunger18, gonzalez2019out, Razieh19, ArvidssonShukur2020} motivate the extension of the KD distribution from $2$ to $k$ bases, corresponding to $k$ observables ${A}^{(1)}, \ldots, {A}^{(k)}$. If ${A}^{(r)} = \sum_j a^{(r)}_j \Pi^{a^{(r)}}_j$ is the  spectral decomposition of ${A}^{(r)}$, then the extended KD distribution is defined as:
\begin{align}
Q^\star_{i_1,\ldots,i_k} (\rho) \equiv \mathrm{Tr}(\Pi^{a^{(k)}}_{i_k} \cdots \Pi^{a^{(1)}}_{i_1}\, \rho).
\end{align} \label{KDExt}
A more detailed discussion and application of the extended KD distribution will be presented in Sec.~\ref{s4}.

It is worth noting that the elements of the KD distribution serve as the coefficients in an operator expansion of the density matrix, $\rho$:
\begin{equation}
\rho = \sum_{i_1, \dots, i_k} \frac{\ket{a_{i_1}^{(1)}}\bra{a_{i_k}^{(k)}}}{\langle a_{i_k}^{(k)} | a_{i_1}^{(1)} \rangle} Q^\star_{i_1, \dots , i_k} (\rho).
\end{equation} \label{KDDecom}

Another operational interpretation of the KD distribution emerges in the context of weak measurement theory. The expectation value of an observable $\mathcal{O}$ may be expressed as:
\begin{equation}
\langle \mathcal{O} \rangle = \mathrm{Tr}(\mathcal{O} \rho) = \sum_{i,j} Q_{ij} \mathcal{O}^w_{ij},
\end{equation}
where $\mathcal{O}^w_{ij} = \bra{f_j} \mathcal{O} \ket{a_i} / \langle f_j | a_i \rangle$ represents the weak value of $\mathcal{O}$, conditioned on the initial state $\ket{a_i}$ and final state $\ket{f_j}$ \cite{Vaidman88, Duck89}.

In general, KD distribution is not a true probability distribution, as its elements $Q_{ij}$ may be negative or even complex. If one restricts only to the real parts of $Q_{ij}$, the corresponding distribution is called the MHQ distribution \cite{Margenau61}, which has found several applications in quantum thermodynamics \cite{PRXQuantum.1.010309,PhysRevResearch.6.023280}. Formally, the elements of the MHQ distribution are given by
\begin{equation}
    Q^{\mathrm{MHQ}}_{ij}(\rho) = \mathrm{Re}\!\left[ Q_{ij}(\rho) \right]
    = \mathrm{Re}\!\left[\langle f_j | a_i \rangle \langle a_i | \rho | f_j \rangle \right]. \label{mhq}
\end{equation}
$Q^{\mathrm{MHQ}}_{ij}(\rho)$ provides a real-valued normalized quasiprobability representation of the state $\rho$, where negative values signal nonclassicality associated with the bases $\{ |a_i\rangle \}$ and $\{ |f_j\rangle \}$.

The KD distribution provides a framework for bounding values attainable in classical scenarios and has found broad applications in quantum information theory \cite{Dressel15, Yunger18,gonzalez2019out, Razieh19,ArvidssonShukur2020, kunjwal2019anomalous, pusey2014anomalous, dressel2011experimental,lostaglio2020certifying,Upadhyaya24}. For a comprehensive overview, we refer interested readers to Ref. \cite{arvidsson2024properties}.

\subsection{Moment criteria}

The use of moments of the partially transposed density matrix---referred to as PT moments---was originally proposed by Calabrese {\it et al.} to study quantum correlations in many-body systems, particularly in the context of relativistic quantum field theory \cite{PhysRevLett.109.130502}. These moments offer a pathway to indirectly access the eigenvalues \(\{\lambda_i\}\) of the partially transposed matrix \(\rho^{T_B}_{AB}\). The corresponding characteristic polynomial is
\begin{equation}
    \text{Det}(\rho^{T_B}_{AB} - \lambda I) = \sum_k a_k \lambda^k,
\end{equation}
where the coefficients \(a_k\) are functions of the PT moments, defined as
\begin{equation}
    p_n = \text{Tr}\left[(\rho_{AB}^{T_B})^n\right], \label{PTmoments}
\end{equation}
for integers \(n \geq 1\). These moments offer a pathway to find the spectrum of $\rho^{T_B}_{AB}$ in an experiment friendly way, which is otherwise impossible due to the unphysical nature of the transposition map.
By construction, \(p_1 = \text{Tr}[\rho_{AB}^{T_B}] = 1\), while the second moment \(p_2\) is associated with the purity of the partially transposed state. The third moment \(p_3\), on the other hand, captures subtler features beyond purity and is used for entanglement detection. Specifically, it was shown in Ref.~\cite{PhysRevLett.125.200501} that for any PPT (positive partial transpose) state
\begin{equation}
    p_2^2 \leq p_3.
\end{equation}
Violation of this condition implies the presence of entanglement, giving rise to the so-called \(p_3\)-PPT criterion. In the special case of Werner states, this criterion is both necessary and sufficient, and hence equivalent to the traditional PPT condition.

Beyond \(p_3\), the higher-order PT moments (\(n \geq 4\)) offer an independent criterion for entanglement detection. To systematically exploit this, the work in Ref.~\cite{yu2021optimal} introduced the concept of Hankel matrices constructed from the sequence of moments. Given a vector of moments \(\mathbf{p} = (p_1, p_2, \ldots, p_{2m+1})\), the \( (m+1) \times (m+1) \) Hankel matrix \( H_m(\mathbf{p}) \) is defined element wise as
\begin{equation}
    [H_m(\mathbf{p})]_{ij} := p_{i+j+1}, \quad i,j \in \{0, 1, \ldots, m\}.
    \label{Hankelmatrices}
\end{equation}
Hence, the first and second order matrices are defined as
\begin{equation}
    H_1 = \begin{pmatrix}
p_1 & p_2 \\
p_2 & p_3
\end{pmatrix} \quad \text{and} \quad
H_2 = \begin{pmatrix}
p_1 & p_2 & p_3 \\
p_2 & p_3 & p_4 \\
p_3 & p_4 & p_5
\end{pmatrix}.
\end{equation}
Using these Hankel matrices, a necessary condition for separability is
\begin{equation}
    \det[H_m(\mathbf{p})] \geq 0.
    \label{Hankelmatrixcondition}
\end{equation}

A practical advantage of PT moments is that they can be estimated experimentally via shadow tomography techniques~\cite{aaronson2018shadow,huang2020predicting,cieslinski2024analysing}. These methods circumvent the need for full state tomography, significantly lowering the experimental overhead. Since moment-based criteria rely only on a polylogarithmic number of state copies (as opposed to the exponential scaling required for full tomography), they present an efficient and scalable route to entanglement detection, particularly in high-dimensional and many-body systems. Furthermore, the criterion being state-independent, is suitable for applications where little or no prior information about the system is available.

Motivated by the above considerations, in the next section we explore how moment-based detection schemes can be developed for the detection of nonpositivity of the KD distribution. Firstly, we define the moments of KD distribution $(q_n)$, and based on it we develop our formalism for the detection of KD nonpositivity.

\section{Detection of KD nonpositivity} \label{s3}
In this section, we present our main results for detecting KD nonpositivity using moments of the KD distribution. We also provide explicit examples that support our proposed theorem, aiming to firmly establish its validity.\\

\begin{definition} 
Let ${\rho}$ be an arbitrary density operator belonging to the set of bounded operators $\mathcal{B}(\mathbb{H})$ acting on a $d$-dimensional Hilbert space $\mathbb{H}$, and $Q(\rho)$ be the KD distribution for the quantum state $\rho$ with respect to bases $\{\ket{a_i}\}$ and $\{\ket{f_j}\}$ as defined in Sec.~\ref{s2A}. We define the $n$-th order KD moments $(q_n) $ as:
\begin{equation}
\begin{split}
 q_n := \sum_{i,j}
 (Q_{ij} (\rho))^n\label{kdmoments}
\end{split}
\end{equation}
where $n$ is an integer.
\end{definition}

For brevity, we denote the KD distribution by \( Q \equiv Q(\rho) \), with elements \( Q_{ij} \equiv Q_{ij}(\rho) \), whenever the state \( \rho \) is prominent from the context. Following the definition of KD moments in Eq.~\eqref{kdmoments}, we now propose our criterion for detecting KD nonpositivity in the following theorems.\\

\begin{theorem} \label{theorem1}
If the KD distribution $(Q (\rho) )$ for a quantum state $\rho$ with respect to bases $\{\ket{a_i}\}$ and $\{\ket{f_j}\}$ is positive, then 
\begin{equation}
{q_2}^2 \leq q_3 \label{q2_criterion}
\end{equation}
where $q_2$ and $q_3$ are defined in \eqref{kdmoments}.
\end{theorem}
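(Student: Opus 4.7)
The plan is to show that positivity of the KD distribution means $\{Q_{ij}\}$ is an ordinary (nonnegative, normalised) probability distribution on the index set $(i,j)$, and then the inequality $q_2^2 \le q_3$ becomes a purely elementary statement about probability distributions that follows from a Cauchy--Schwarz-type argument. Since this is a one-line inequality once the setup is right, the main work is conceptual rather than computational.

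First, I would record what ``positive'' means: by the Kolmogorov-like marginal properties of the KD distribution stated in Sec.~\ref{s2A}, $\sum_{ij} Q_{ij}(\rho) = 1$ always, and the marginals $\sum_j Q_{ij}$ and $\sum_i Q_{ij}$ equal the measurement probabilities $p(a_i|\rho)$ and $p(f_j|\rho)$, which are real and nonnegative. If in addition each $Q_{ij}$ is a nonnegative real number, then $\{Q_{ij}\}_{i,j}$ is a bona fide probability distribution. In particular $Q_{ij}^n \ge 0$ for all $n$, so the KD moments $q_n$ defined in Eq.~\eqref{kdmoments} are nonnegative real numbers.

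The key step is then a single application of the Cauchy--Schwarz inequality. I would split $Q_{ij}^2$ as $Q_{ij}^{1/2} \cdot Q_{ij}^{3/2}$ and write
\begin{equation}
q_2 = \sum_{i,j} Q_{ij}^2 = \sum_{i,j} Q_{ij}^{1/2}\, Q_{ij}^{3/2}
\le \Bigl(\sum_{i,j} Q_{ij}\Bigr)^{\!1/2} \Bigl(\sum_{i,j} Q_{ij}^3\Bigr)^{\!1/2}.
\end{equation}
Using normalisation $\sum_{ij} Q_{ij} = 1$ and the definition $q_3 = \sum_{ij} Q_{ij}^3$, squaring both sides yields $q_2^2 \le q_3$, which is exactly Eq.~\eqref{q2_criterion}. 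Alternatively, the same bound can be obtained from the power-mean inequality applied to the nonnegative numbers $\{Q_{ij}\}$ with weights $\{Q_{ij}\}$, viewing $q_{n+1}/q_1 = \mathbb{E}[X^n]$ where $X$ is the random variable that takes value $Q_{ij}$ with probability $Q_{ij}$.

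The hard part, insofar as there is one, is really just making sure that the hypothesis ``$Q(\rho)$ is positive'' is interpreted correctly, i.e., that positivity rules out complex phases as well as negative real parts so that $Q_{ij}^{1/2}$ is well defined and Cauchy--Schwarz applies termwise. Once that is clear, the derivation is essentially a single inequality, and I would finish by remarking that the converse need not hold---so a violation of $q_2^2 \le q_3$ is a sufficient, but not necessary, witness of KD nonpositivity, motivating the higher-order and Hankel-matrix extensions anticipated later in the paper.
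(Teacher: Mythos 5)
Your proof is correct, and it reaches the conclusion by a shorter route than the paper. The paper first applies H\"older's inequality with exponents $(3,\tfrac32)$ to $u=v=Q$, obtaining $\|Q\|_{l_2}^2 \le \|Q\|_{l_3}\,\|Q\|_{l_{3/2}}$, then bounds $\|Q\|_{l_{3/2}} \le \|Q\|_{l_2}^{2/3}\|Q\|_{l_1}^{1/3}$ by a second (Cauchy--Schwarz) step, and finally combines these with $\|Q\|_{l_1}=1$ to get $\|Q\|_{l_2}^4 \le \|Q\|_{l_3}^3$. You instead get the same conclusion from a single Cauchy--Schwarz application via the split $Q_{ij}^2 = Q_{ij}^{1/2}\,Q_{ij}^{3/2}$, which directly yields $q_2^2 \le q_1\,q_3$ and hence $q_2^2\le q_3$ after using normalisation; this is exactly the log-convexity of the moment sequence, and your probabilistic reformulation ($q_2^2\le q_3$ is $\mathbb{E}[X]^2\le\mathbb{E}[X^2]$, i.e.\ $\mathrm{Var}(X)\ge 0$ for $X$ taking value $Q_{ij}$ with probability $Q_{ij}$) makes the mechanism, the role of positivity, and the equality case (uniform values on the support) more transparent than the paper's two-step norm-interpolation argument. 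Both proofs use the hypothesis in the same two places --- positivity is what lets one identify $|Q_{ij}|$ with $Q_{ij}$ so that $\sum_{ij}|Q_{ij}|=1$ and the fractional powers are well defined --- so the content is the same; yours is simply the more economical presentation, while the paper's explicit $l_p$-norm bookkeeping foreshadows the Hankel-matrix hierarchy of its Theorem~2.
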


\begin{proof}
Let the KD distribution $(Q )$ be a positive, real-valued, and normalized distribution. Let us now consider the vector $l_{p}$ norm of $Q$ for $p \ge 1$, defined as:
 \begin{equation}
 \begin{split}
   & ||Q||_{l_P} := \left( \sum_{i,j} |Q_{ij}| ^{p} \right)^{\frac{1}{p}} .\label{schattenp}
    \end{split} 
\end{equation}
\vspace{0.1cm}

The standard inner product corresponding to a $d$ vector is defined as
 \begin{equation}
     \langle u, v \rangle :=\sum_{i=1}^{d} u_{i} v_{i} \label{innerproduct}
 \end{equation}
 for $u,v \in \mathbb{R}^{d}$.

\textit{H\"{o}lder's inequality for vector $l_{p}$ norm.} For $p, q \ge 1$ and $\frac{1}{p}+\frac{1}{q} =1$, the following relation holds:
 \begin{equation}
     |\langle u, v\rangle | \le \sum_{i=1}^{d} |u_{i} v_{i}| \le ||u||_{l_p} ||v||_{l_q} .\label{hoelderinequality}
 \end{equation}
Note that, for $p=q=2$, H\"{o}lder's inequality reduces to the Cauchy-Schwarz inequality.\\

In our case, setting $p=3$, $q=\frac{3}{2}$ in H\"{o}lder's inequality and choosing $u=v=Q$, we obtain
\begin{equation}
     ||Q.Q||_{l_1} \leq  ||Q||_{l_3} ||Q||_{{l}_{\frac{3}{2}}}. \label{hoelder}
     \end{equation}
Now, 
\begin{equation}
\begin{split}
&||Q||_{l_{\frac{3}{2}}}  =\left(\sum_{i,j} \hspace{1mm} |Q_{ij}|^{\frac{3}{2}} \right)^{\frac{2}{3}} \\
& = \left(\sum_{i,j} \hspace{1mm} |Q_{ij}|\cdot |Q_{ij}|^{\frac{1}{2}}\right)^{\frac{2}{3}}\\
  & \leq \left[\left(\sum_{i,j} \hspace{1mm} |Q_{ij}|^{2} \hspace{1mm}   \right)^{\frac{1}{2}}  \left(\sum_{i,j} \hspace{1mm} |Q_{ij}|  \right)^{\frac{1}{2}}\right]^{\frac{2}{3}}  \\
   & = \left[\left(\sum_{i,j} \hspace{1mm} |Q_{ij}|^{2} \right)^{\frac{1}{3}} \left(\sum_{i,j} \hspace{1mm} |Q_{ij}| \right)^{\frac{1}{3}}\right]
\end{split}
\end{equation}
{\it i.e.},
\begin{equation}
||Q||_{l_{\frac{3}{2}}} \leq  {||Q||^{\frac{2}{3}}_{l_2}} {||Q||^{\frac{1}{3}}_{l_1}} .\label{cs}
\end{equation}
Putting the value of Eq.~\eqref{cs} in Eq.~\eqref{hoelder}, we get,
\begin{equation}
     ||Q.Q||_{l_{1}} \leq  ||Q||_{l_{3}}   {||Q||^{\frac{2}{3}}_{l_2}}    {||Q||^{\frac{1}{3}}_{l_1}}. \label{16}
\end{equation}
Further note that, 
\begin{equation}
\begin{split}
||Q.Q||_{l_1}  = {||Q||^{2}_{l_2}}.\\ 
\end{split} \nonumber
\end{equation}
Therefore, it follows that
     \begin{equation}
     {||Q||^2_{l_{2}}} \leq  ||Q||_{l_{3}}{||Q||^{\frac{2}{3}}_{l_2}} ||Q||^{\frac{1}{3}}_{l_1}. \label{16} \nonumber
     \end{equation}
     Taking $3$rd power on both sides, we get
     \begin{equation}
     {||Q||^6_{l_2}} \leq  {||Q||^{3}_{l_{3}}}{||Q||^{2}_{l_2}} {||Q||^{1}_{l_1}} \label{17} \nonumber
     \end{equation}
     {\it i.e.},  \begin{equation}
     {||Q||^4_{l_{2}}} \leq  {||Q||^{3}_{l_{3}}} {||Q||^{1}_{l_1}} .  \label{main}
     \end{equation}
On the other hand, from the normalization condition of the KD distribution $(Q)$, we get 
\begin{equation}
       ||Q||_{l_1}  = \left(\sum_{i,j} \hspace{1mm} |Q_{ij}|^{1} \hspace{1mm}  \right)^1
    =1. \nonumber
\end{equation}
Since we have assumed that the KD distribution is positive, {\it i.e.} $Q_{i,j} = |Q_{ij}|, \hspace{1mm} \forall  \hspace{1mm}i,j$, 
Therefore, from Eq. \eqref{main}, we get
\begin{equation}
     {||Q||^4_{l_{2}}} \leq  {||Q||^{3}_{l_{3}}}, \nonumber
     \end{equation} 
     or 
     \begin{equation}
        {q_2}^2 \leq q_3,
\end{equation}
which completes our proof.
\end{proof}

Above theorem implies that condition presented in Eq.~\eqref{q2_criterion} is necessary for positivity of a KD distribution. Hence, violation of the above theorem is sufficient to conclude that the KD distribution is nonpositive. Below we present an example in support of Theorem~\ref{theorem1}.

\begin{example} \label{Ex:Saturated Nonclassicality}
Suppose that ${A}$ and ${F}$ act on a two-qubit Hilbert space and have eigenbases 
\begin{align}
    \{ \ket{a_i} \} &= \{ \ket{0}\ket{0}, \ket{0}\ket{1}, \ket{1}\ket{0}, \ket{1}\ket{1} \}\\
    \{ \ket{f_j} \} &= \{ \ket{+}\ket{+}, \ket{-}\ket{+}, \ket{+}\ket{-}, \ket{-}\ket{-} \}. 
\end{align} 
Note that $\{ \ket{a_i} \}$ and $\{ \ket{f_j} \}$ form a pair of mutually unbiased bases (MUBs). Consider the state,
\begin{align}
    {\rho} = p \ket{\Psi}\bra{\Psi} +\frac{1-p}{4}\mathbb{I} ,
\end{align}  
where $\ket{\Psi} = (\ket{0}\ket{0}+\ket{0}\ket{1}+\ket{1}\ket{0}-\ket{1}\ket{1})/2$. The overlaps $\left| \langle \Psi \mid a_i \rangle \right|
 = \left| \langle \Psi \mid f_j \rangle \right|
 =\left| \langle a_i \mid f_j \rangle \right|
 = \frac{1}{2}$ for all $i,j$. The resulting KD distribution is given in Table \ref{Tab:Saturated Nonclassicality}.
\end{example}

\begin{table}[t]
\caption{\label{Tab:Saturated Nonclassicality}
KD distribution corresponding to Example~\ref{Ex:Saturated Nonclassicality}.
}
\begin{ruledtabular}
\begin{tabular}{ccccc}
\diagbox{$\ket{f_j}$}{$\ket{a_i}$} & $\ket{0}\ket{0}$ & $\ket{0}\ket{1}$ & $\ket{1}\ket{0}$ & $\ket{1}\ket{1}$ \\ 
\midrule
$\ket{+}\ket{+}$ &
$\frac{1+p}{16}$ &
$\frac{1+p}{16}$ &
$\frac{1+p}{16}$ &
$\frac{1-3p}{16}$ \\ \\

$\ket{-}\ket{+}$ &
$\frac{1+p}{16}$ &
$\frac{1+p}{16}$ &
$\frac{1-3p}{16}$ &
$\frac{1+p}{16}$ \\ \\

$\ket{+}\ket{-}$ &
$\frac{1+p}{16}$ &
$\frac{1-3p}{16}$ &
$\frac{1+p}{16}$ &
$\frac{1+p}{16}$ \\ \\

$\ket{-}\ket{-}$ &
$\frac{1-3p}{16}$ &
$\frac{1+p}{16}$ &
$\frac{1+p}{16}$ &
$\frac{1+p}{16}$ \\
\end{tabular}
\end{ruledtabular}
\end{table}
It can be readily verified that the KD distribution becomes nonpositive whenever $p>1/3$. Using Eq. \eqref{kdmoments}, we calculate $q_2^2-q_3$ for this example and get,
\begin{align}
    q_2^2 - q_3 = \frac{9p^4}{256} + \frac{3p^3}{128} -\frac{3p^2}{256}.
\end{align}
Now, from Theorem \ref{theorem1}, we get the KD distribution is nonpositive when 
\begin{align}
    q_2^2 -q_3 > 0 \;\; \Rightarrow p > \frac{1}{3}.
\end{align}
This validates Theorem 1.

We now present a second theorem, which further strengthens Theorem~\ref{theorem1} by incorporating higher-order moments.\\

\begin{theorem}\label{theorem2}
    If the KD distribution $(Q (\rho))$ of a quantum state $\rho$ with respect to bases $\{\ket{a_l}\}$ and $\{\ket{f_p}\}$ with $l,p =1,2,\dots,d$, is positive, then

\begin{equation}
       \det\left[H_{m}(\mathbf{q})\right] \ge 0 . \label{Hankelmatrix}
   \end{equation}
Here, $[H_{m}(\mathbf{q})]_{ij} = q_{i+j+1}$ for $i,j \in \{0,1,\dots,m\}$, $m \in \mathbb{N}$ and $ \mathbf{q}= (q_1, q_2, \dots, q_{2m+1})$ are the corresponding KD moments defined in Eq.~\eqref{kdmoments}.
\end{theorem}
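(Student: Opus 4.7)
The plan is to observe that $q_n$ is the $n$th moment of a discrete measure supported on the non-negative reals, so that Theorem~\ref{theorem2} becomes an instance of the classical Hankel characterisation from the Stieltjes moment problem. Concretely, under the positivity hypothesis $Q_{ij}\ge 0$, every atom $Q_{ij}$ lies in $[0,\infty)$, and $q_n = \sum_{i,j}(Q_{ij})^n$ is literally $\int x^n\,d\mu(x)$ for the atomic measure $\mu = \sum_{i,j}\delta_{Q_{ij}}$. The shifted Hankel matrix $[q_{k+l+1}]$ is thus a Gram-type matrix against the monomial basis, weighted by the positive measure $x\,d\mu(x)$, which immediately suggests positive semidefiniteness.

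Rather than invoking the moment problem as a black box, I would prove the statement by a direct one-line factorisation. Defining the column vector $v(x) := (1, x, x^2, \ldots, x^m)^T$, the identity to verify is
\[ H_m(\mathbf{q}) \;=\; \sum_{i,j} Q_{ij}\, v(Q_{ij})\, v(Q_{ij})^T, \]
which holds because the $(k,l)$ entry of the right-hand side equals $\sum_{i,j} Q_{ij}\cdot Q_{ij}^{k}\cdot Q_{ij}^{l} = \sum_{i,j}(Q_{ij})^{k+l+1} = q_{k+l+1}$, matching the defining relation recalled in Eq.~\eqref{Hankelmatrices}. Each rank-one outer product $v(Q_{ij}) v(Q_{ij})^T$ is positive semidefinite, and by hypothesis every weight $Q_{ij}$ in the sum is non-negative, so $H_m(\mathbf{q})$ is a non-negative combination of positive semidefinite matrices and is itself positive semidefinite. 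In particular $\det H_m(\mathbf{q})\ge 0$, as required.

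The main obstacle, such as it is, lies only in spotting this factorisation; once written down, no further analytic inequality (in contrast to the H\"older step used in the proof of Theorem~\ref{theorem1}) is needed, and all orders $m$ are treated uniformly. As a sanity check, the $m=1$ case yields $\det H_1 = q_1 q_3 - q_2^2 \ge 0$, which reduces to $q_2^2\le q_3$ upon using the normalisation $q_1 = \sum_{i,j}Q_{ij} = 1$, recovering Theorem~\ref{theorem1} exactly.
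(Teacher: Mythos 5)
Your proof is correct and is essentially the same as the paper's: your rank-one decomposition $H_m(\mathbf{q})=\sum_{i,j}Q_{ij}\,v(Q_{ij})v(Q_{ij})^T$ is exactly the paper's factorisation $H_m(\mathbf{q})=V_mDV_m^T$ (with $V_m$ the Vandermonde-type matrix whose columns are your $v(\chi_i)$ and $D$ the diagonal matrix of nonnegative KD values) written out as a sum of outer products, and both arguments conclude positive semidefiniteness and hence $\det H_m(\mathbf{q})\ge 0$ in the same way.
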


\begin{proof}
We consider the KD distribution $(Q)$ for a quantum state $\rho$ in a $d$-dimensional Hilbert space, defined with respect to two bases $\{ \ket{a_l} \}$ and $\{ \ket{f_p} \}$, with $l,p = 1,2,\dots,d$, to be a real-valued, positive, normalized distribution. Since there are $d$ choices for both $l$ and $p$, the distribution consists of $d^2$ elements in total, which is denoted as $\{ Q_{ij} \}$.

From these $d^2$ elements we construct a diagonal matrix $D$ of size $d^2 \times d^2$ by placing each KD element on the diagonal. Explicitly, 
\[
    D = \mathrm{diag}(Q_{11}, Q_{12}, \dots, Q_{1d}, Q_{21}, \dots, Q_{dd}).
\]
The order in which the KD elements are arranged on the diagonal is immaterial for the argument. For convenience, we label the diagonal entries as $\chi_k$, so that
\[
    D = \sum_{k=1}^{d^2} \chi_k \ket{k}\bra{k}, \qquad \chi_k \geq 0, \;\; \sum_{k=1}^{d^2} \chi_k = 1.
\]
By construction, $D$ is a diagonal, positive semidefinite matrix with unit trace, and it is already written in spectral decomposed form.  

Now, consider the KD moment vector $\mathbf{q} = (q_1, q_2, \dots, q_{2m+1})$ as introduced in Theorem~\ref{theorem2}. The associated Hankel matrix $H_m(\mathbf{q})$, constructed from these moments, admits a Vandermonde decomposition of the form~\cite{boley1997vandermonde,tyrtyshnikov1994bad,heinig2013algebraic},
\begin{equation}
    H_m(\mathbf{q}) = V_m D V_m^T,
\end{equation}
where $V_m$ is the associated Vandermonde matrix:

\begin{equation}
V_m = 
\begin{pmatrix}
1 & 1 & \cdots &1  \\ 
\chi_1 & \chi_2 & \cdots & \chi_{d^2}  \\
\vdots & \vdots & \ddots & \vdots \\
\chi_1^m & \chi_2^m & \cdots & \chi_{d^2}^m 
\end{pmatrix}
\end{equation}
and
\begin{equation}
D = 
\begin{pmatrix}
 \chi_1& 0 & \cdots &0  \\ 
0 & \chi_2 & \cdots & 0 \\
\vdots & \vdots & \ddots &\vdots\\
0 & 0 & \cdots &   \chi_{d^2}
\end{pmatrix}
\end{equation}
Now, for an arbitrary vector $ \boldsymbol{x}=(x_1, \dots ,x_m, x_{m+1}) \in {\mathbb{R}}^{m+1}$, we have
\begin{align}
     \boldsymbol{x}  H_{m}(\mathbf{q}) \boldsymbol{x}^T &= \boldsymbol{x}  V_m D V_m^T  \boldsymbol{x}^T \nonumber \\
     &= \boldsymbol{y} D \boldsymbol{y}^T \nonumber \\
     &= \sum_{i=1}^{d^2} \chi_i {y_i}^2 \ge 0,
\end{align}
where, $\boldsymbol{y}= \boldsymbol{x}  V_m = (y_1, y_2,\dots, y_{d^2}) $ with $y_i= \sum_{j=1}^{m+1} x_j \hspace{0.1cm}{\chi_i}^{j-1}, \hspace{0.1cm} \text{ for } i = 1, 2, \dots, d^2$.

Hence, \( \boldsymbol{x} H_{m}(\mathbf{q}) \boldsymbol{x}^T \ge 0 \), which implies that \( H_{m}(\mathbf{q}) \) is positive semidefinite, i.e., \( H_{m}(\mathbf{q}) \ge 0 \). Consequently, its determinant satisfies \( \det[H_{m}(\mathbf{q})] \ge 0 \), thereby completing the proof.
\end{proof}

Similar to Theorem~\ref{theorem1}, Theorem~\ref{theorem2} establishes that the condition given in Eq.~\eqref{Hankelmatrix} is a necessary requirement for the positivity of the KD distribution. Consequently, any violation of this condition is sufficient to conclude that the KD distribution is nonpositive. It is important to note that the lowest-order case of Theorem~\ref{theorem2}, corresponding to \(m=1\), yields \(\det[H_1(\mathbf{q})] = q_3 - q_2^2 \geq 0\), which precisely recovers the criterion presented in Theorem~\ref{theorem1}.  
For \(m>1\), Theorem~\ref{theorem2} provides a hierarchy of progressively stronger criteria to detect the nonpositivity of the KD distribution.  
We now present an example to validate Theorem~\ref{theorem2} and demonstrate how it offers a stronger detection condition than Theorem~\ref{theorem1}.

\begin{example}\label{Ex:NonClKDClassicalIneq}
Consider a two-qubit system. We choose ${A}$ and ${F}$ such that  $\{ \ket{a_i} \} = \{ \ket{0}\ket{0}, \ket{0}\ket{1}, \ket{1}\ket{0}, \ket{1}\ket{1} \}$ and $\{ \ket{f_j} \} = \{ \ket{0}\ket{+}, \ket{0}\ket{-}, \ket{1}\ket{+}, \ket{1}\ket{-} \}$. 
We set ${\rho} = \ket{\Psi} \bra{\Psi}$, where $\ket{\Psi} = \left( \ket{0}\ket{0}+2\ket{0}\ket{1} \right)/\sqrt{5}$. The corresponding KD distribution is presented in Table \ref{Tab:NonClKDClassicalIneq}.
\end{example}

\begin{table}[t]
\caption{\label{Tab:NonClKDClassicalIneq}
KD distribution corresponding to Example~2.
}
\begin{ruledtabular}
\begin{tabular}{ccccc}
\diagbox{$\ket{f_j}$}{$\ket{a_i}$} & $\ket{0}\ket{0}$ & $\ket{0}\ket{1}$ & $\ket{1}\ket{0}$ & $\ket{1}\ket{1}$ \\ 
\midrule
$\ket{0}\ket{+}$ & $\frac{3}{10}$ & $\frac{3}{5}$ & $0$ & $0$ \\ \\
$\ket{0}\ket{-}$ & $-\frac{1}{10}$ & $\frac{1}{5}$ & $0$ & $0$ \\ \\
$\ket{1}\ket{+}$ & $0$ & $0$ & $0$ & $0$ \\ \\
$\ket{1}\ket{-}$ & $0$ & $0$ & $0$ & $0$ \\ 
\end{tabular}
\end{ruledtabular}
\end{table}

Using \eqref{kdmoments}, we calculate \( (q_2^2 - q_3 )\) for this example and find
$q_2^2 - q_3 = 0.$ Thus, according to Theorem~\ref{theorem1}, the nonpositivity of the KD distribution is not detected for this example.  
However, when we apply our proposed criterion in Theorem~\ref{theorem2}, we find \(\det(H_2) = -2.0736 \times 10^{-4} < 0\).  
The negative determinant clearly reveals the nonpositivity of the KD distribution, demonstrating that higher-order moment conditions, as captured by Theorem~\ref{theorem2}, are effective in detecting nonpositivity in this case.

\section{Detecting quantum coherence and nonclassical extractable work as a direct consequence of KD nonpositivity detection}\label{s4}
In this section, we investigate whether our proposed criterion for detecting KD nonpositivity can also be applied to identifying other quantum resources, inherently linked to it. In particular, we demonstrate that, as a direct extension, the criterion can effectively be used to detect quantum resources such as quantum coherence and nonclassical extractable work.

\subsection{Detecting Quantum Coherence}
 Coherence is one of the fundamental elements of quantum theory that arises from the superposition principle, underpinning many nonclassical features \cite{streltsov2017colloquium}. In this subsection, we will discuss how quantum coherence can be detected as a direct implication of our proposed moment criteria.

 Let $\rho$ be a quantum state belonging to the set of bounded operators $\mathcal{B}(\mathbb H)$ acting on a $d$-dimensional Hilbert space ($\mathbb{H}$) and consider an orthonormal basis $\{\ket{a_i}\}$ on $\mathbb{H}$ ({\it i.e.}, $\sum_i \ket{a_i} \bra{a_i} = \mathbb{I}$). The quantum state $\rho$ can be expressed in terms of the orthonormal basis $\{\ket{a_i}\}$ as
 \begin{equation}
     \rho = \sum_{ij} a_{ij} \ket{a_i} \bra{a_j}
 \end{equation}
where, $ a_{ij} = \bra{a_i} \rho \ket{a_j}$. The state $\rho$ is said to be incoherent with respect to an orthonormal basis $\{\ket{a_i}\}$, if it is represented by a diagonal matrix in that basis {\it i.e.}, the off-diagonal elements $a_{ij}=0$ for all $i\neq j$. If, however, there exists at least one non-zero off-diagonal element {\it i.e.}, $a_{ij}\neq 0$ for any $i\neq j$, then $\rho$ is considered coherent with respect to the basis $\{\ket{a_i}\}$. A natural way to quantify quantum coherence is by analyzing the off-diagonal elements of the density matrix $\rho$, as these components directly capture the presence of superposition with respect to the orthonormal basis $\{\ket{a_i}\}$. One widely adopted measure of coherence based on this idea is the $\ell_1$-norm of coherence \cite{baumgratz2014quantifying}, defined as:
\begin{equation} \label{coherencemeasure}
    \mathcal{C}_{\ell_1} (\rho, \{\ket{a_i}\} ) = \sum_{i \neq j} |a_{ij}|.
\end{equation}
 Note that a quantum state $\rho$ can be expressed with respect to more than one set of measurement operators, giving rise to the concept of extended Kirkwood-Dirac (KD) distribution \cite{arvidsson2024properties}. While the general definition is provided in Sec. \ref{s2A}, here we consider two orthonormal bases $\{\ket{a_i}\}$ and $\{\ket{b_k}\}$, where $\{\ket{b_k}\}$ is mutually unbiased basis with respect to $\{\ket{a_i}\}$. In this case, the elements of the extended KD distribution for the quantum state $\rho$ takes the form:
 \begin{equation}
   Q^{\star}_{i,j,k} (\rho) = \langle a_j | b_k \rangle \langle b_k | a_i \rangle \langle a_i | \rho | a_j \rangle .
 \end{equation}

To demonstrate the relevance of our moment-based approach in detecting quantum coherence, we begin by defining the moments of the extended KD distribution.

\begin{definition}
 Let $\mathbb {H}$ be a $d$-dimensional Hilbert space and ${\rho}$ be an arbitrary density operator defined on $\mathbb H$. Consider $Q^{\star}(\rho)$ to be the extended KD distribution for the quantum state $\rho$ with respect to bases $\{\ket{a_i}\}$ and $\{\ket{b_k}\}$, where $\{\ket{b_k}\}$ is mutually unbiased with respect to $\{\ket{a_i}\}$. We define the $n$-th order extended KD moments $(r_n) $ as:

\begin{equation} 
\begin{split}
 r_n := \sum_{i,j,k}
 ( Q^{\star}_{i,j,k} (\rho) )^n \label{extendedkdmoments}
\end{split}
\end{equation}
where $n$ is an integer.
\end{definition}

Based on the above definition, we now formulate a refined detection criterion, using the moments of the extended KD distribution.

\begin{corollary} \label{theorem3}
If the extended KD distribution $Q^{\star}(\rho)$ for a quantum state $\rho$ with respect to bases $\{\ket{a_i}\}$ and $\{\ket{b_k}\}$ is positive, where $\{\ket{b_k}\}$ is mutually unbiased basis with respect to $\{\ket{a_i}\}$, then 
\begin{equation}
\det[H_{m}(\mathbf{r})] \ge 0 . \label{Hankelmatrix_coherence}
\end{equation}
Here, $[H_{m}(\mathbf{r})]_{ij} = r_{i+j+1}$ for $i,j \in \{0,1,\dots,m\}$, $m \in \mathbb{N}$ and $ \mathbf{r}= (r_1, r_2, \dots, r_{2m+1})$ are the corresponding extended KD moments defined in Eq.~\eqref{extendedkdmoments}.
\end{corollary}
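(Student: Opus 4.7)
The plan is to mirror the proof of Theorem~\ref{theorem2} almost verbatim, since Corollary~\ref{theorem3} is essentially the extended KD version of the same statement; the Vandermonde--Hankel argument used there never truly relies on the two-index structure of the ordinary KD distribution, and it carries over directly to the three-index setting of the extended KD distribution. The strategy is to convert positivity of $Q^\star(\rho)$ into the positive semidefiniteness of an auxiliary diagonal matrix whose eigenvalues reproduce the moment sequence $\{r_n\}$, and then read off the Hankel condition as a consequence.

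First, I would note that under the positivity hypothesis each value $Q^\star_{i,j,k}(\rho)$ is a non-negative real number, and that the extended KD distribution is normalized, $\sum_{i,j,k} Q^\star_{i,j,k}(\rho) = 1$. This normalization follows from $\sum_k \ket{b_k}\bra{b_k} = \mathbb{I}$, orthonormality of $\{\ket{a_i}\}$, and $\mathrm{Tr}(\rho)=1$, and notably does not require the MUB relation between the two bases. I would then fix a canonical flattening $(i,j,k) \mapsto \alpha \in \{1,\dots,d^3\}$ of the multi-index and place these $d^3$ values on the diagonal of a matrix $R^\star$. By construction $R^\star$ is positive semidefinite with unit trace, and its spectrum $\{\chi_\alpha\}_{\alpha=1}^{d^3}$ consists exactly of the extended KD values themselves, so that $r_n = \sum_\alpha \chi_\alpha^n$ for every integer $n \geq 1$.

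With this spectral picture in place, the remainder is a direct transcription of the proof of Theorem~\ref{theorem2}: I would write $H_m(\mathbf{r}) = V_m D V_m^T$ with $D = \mathrm{diag}(\chi_1,\dots,\chi_{d^3})$ and $[V_m]_{r\alpha} = \chi_\alpha^r$ for $r = 0,1,\dots,m$, verified entrywise since $[H_m(\mathbf{r})]_{ij} = r_{i+j+1} = \sum_\alpha \chi_\alpha^{i+j+1}$. For any $\boldsymbol{x} \in \mathbb{R}^{m+1}$, setting $\boldsymbol{y} = \boldsymbol{x} V_m$ yields $\boldsymbol{x} H_m(\mathbf{r})\boldsymbol{x}^T = \sum_\alpha \chi_\alpha\, y_\alpha^2 \geq 0$, so $H_m(\mathbf{r})$ is positive semidefinite and $\det[H_m(\mathbf{r})] \geq 0$.

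Because the core argument is already furnished by the proof of Theorem~\ref{theorem2}, I do not anticipate any serious obstacle here; the only substantive new checks are the normalization of $Q^\star$ under the three-index sum and the observation that the spectral/Vandermonde step is oblivious to how many indices parametrize the distribution. The MUB assumption on $\{\ket{b_k}\}$ plays no role in the proof itself — it becomes important only when the corollary is applied to witness coherence, since it is the MUB structure that ties nonpositivity of $Q^\star(\rho)$ to the presence of nonzero off-diagonal elements of $\rho$ in the $\{\ket{a_i}\}$ basis.
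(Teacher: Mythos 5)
Your proposal is correct and follows essentially the same route as the paper, which simply states that the argument parallels Theorem~\ref{theorem2} with $Q$ replaced by $Q^{\star}$ and $H_m(\mathbf{q})$ by $H_m(\mathbf{r})$; your write-up fills in the details of that parallel (the $d^3$-entry diagonal matrix, the Vandermonde factorization, and the quadratic-form argument) exactly as intended. Your side remarks are also apt — normalization of $Q^{\star}$ and the MUB condition are indeed not needed for the Hankel step itself, only positivity of the entries is.
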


\begin{proof}
The argument parallels that of Theorem~\ref{theorem2}, with the key differences being the replacement of the standard KD distribution $Q$ by the extended KD distribution $Q^{\star}$, and the corresponding replacement of the Hankel matrix $H_{m}(\mathbf{q})$ with $H_{m}(\mathbf{r})$.
\end{proof}

Corollary~\ref{theorem3}, much like Theorem~\ref{theorem1} and  Theorem~\ref{theorem2}, asserts that the condition stated in Eq.~\eqref{Hankelmatrix_coherence} is necessary for the extended KD distribution to remain positive. As a result, any violation of this condition serves as a sufficient indicator of nonpositivity. For \(m > 1\), Corollary~\ref{theorem3} introduces a hierarchy of increasingly stronger criteria that can detect nonpositivity with higher sensitivity.

Our goal now is to establish the implications of our proposed moment criteria in detecting quantum coherence, which we formalize in the following theorem.

\begin{theorem}\label{theorem4}
Let $Q^{\star}(\rho)$ be the extended KD distribution for a quantum state $\rho$ defined with respect to bases $\{\ket{a_i}\}$ and $\{\ket{b_k}\}$, where $\{\ket{b_k}\}$ is mutually unbiased basis with respect to $\{\ket{a_i}\}$. If there exists some $m \in \mathbb{N}$ for which the inequality,
\begin{equation}
 - \det[H_{m}(\mathbf{r})] > 0 \label{moment_coherence}, 
\end{equation}
holds, then the state $\rho$ possesses a non-zero coherence in the basis $\{\ket{a_i}\}$ {\it i.e.},
\begin{equation}
    \mathcal{C}_{\ell_1} (\rho, \{\ket{a_i}\} ) >0 \nonumber
\end{equation}
where $\mathcal{C}_{\ell_1} (\rho, \{\ket{a_i}\} )$ is the $\ell_1$ norm of coherence as defined in \eqref{coherencemeasure}. 
\end{theorem}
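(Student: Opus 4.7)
The natural strategy is to prove the contrapositive: assume the state $\rho$ has zero coherence in the basis $\{\ket{a_i}\}$ and show that the extended KD distribution $Q^{\star}(\rho)$ is then pointwise non-negative, which by Corollary~\ref{theorem3} forces $\det[H_{m}(\mathbf{r})] \geq 0$, contradicting the hypothesis $-\det[H_{m}(\mathbf{r})] > 0$.

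The plan is carried out in the following steps. First, I would invoke the definition of $\mathcal{C}_{\ell_1}$: if $\mathcal{C}_{\ell_1}(\rho,\{\ket{a_i}\}) = 0$, then $a_{ij} = \bra{a_i}\rho\ket{a_j} = 0$ for all $i \neq j$, so $\rho = \sum_l a_{ll}\ket{a_l}\bra{a_l}$ with $a_{ll}\geq 0$ and $\sum_l a_{ll}=1$. Next, I would substitute this diagonal expansion into the formula
\begin{equation}
Q^{\star}_{i,j,k}(\rho) = \langle a_j | b_k\rangle \langle b_k | a_i\rangle \langle a_i|\rho|a_j\rangle,
\end{equation}
and use $\langle a_i|\rho|a_j\rangle = a_{ii}\,\delta_{ij}$ to reduce this to $Q^{\star}_{i,j,k}(\rho) = |\langle b_k | a_i\rangle|^2\, a_{ii}\,\delta_{ij}$. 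The mutually unbiased property of $\{\ket{b_k}\}$ with respect to $\{\ket{a_i}\}$ then gives $|\langle b_k|a_i\rangle|^2 = 1/d$, so that $Q^{\star}_{i,j,k}(\rho) = (a_{ii}/d)\,\delta_{ij} \geq 0$ for every triple $(i,j,k)$. Hence $Q^{\star}(\rho)$ is real, non-negative, and normalized. Finally, Corollary~\ref{theorem3} applies and yields $\det[H_{m}(\mathbf{r})] \geq 0$, which contradicts $-\det[H_{m}(\mathbf{r})] > 0$; therefore $\mathcal{C}_{\ell_1}(\rho,\{\ket{a_i}\}) > 0$.

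The main obstacle is not algebraic but conceptual: one must observe that the MUB assumption is precisely what is needed to guarantee that the overlap $\langle a_j|b_k\rangle\langle b_k|a_i\rangle$, which is generally complex, collapses to the non-negative real number $1/d$ once the Kronecker delta $\delta_{ij}$ from the diagonal structure of $\rho$ is enforced. Without the MUB condition one could still have $\delta_{ij}$ surviving, but the resulting weight $|\langle b_k|a_i\rangle|^2$ would merely be non-negative without any uniform lower bound, which is still sufficient for positivity here; however, the MUB structure is what the corollary was stated under, so the argument uses exactly the hypothesis available. No additional machinery beyond Corollary~\ref{theorem3} and the MUB property is needed, so the proof is short once the contrapositive framing is recognised.
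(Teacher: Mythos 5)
Your proof is correct, but it runs in the opposite direction from the paper's. You argue by contraposition: if $\mathcal{C}_{\ell_1}(\rho,\{\ket{a_i}\})=0$ then $\rho$ is diagonal in $\{\ket{a_i}\}$, so $Q^{\star}_{i,j,k}=|\langle b_k|a_i\rangle|^2 a_{ii}\,\delta_{ij}\ge 0$, and Corollary~\ref{theorem3} forces $\det[H_m(\mathbf{r})]\ge 0$, contradicting the hypothesis. This is logically sound (and, as you note, the positivity step needs only $|\langle b_k|a_i\rangle|^2\ge 0$, not the full MUB structure). The paper instead argues directly and quantitatively: using the MUB property in the form $\sum_k|\langle a_j|b_k\rangle\langle b_k|a_i\rangle|=1$, it establishes the identity
\begin{equation}
\mathcal{C}_{\ell_1}(\rho,\{\ket{a_i}\})=\sum_{i,j,k}\bigl|Q^{\star}_{i,j,k}(\rho)\bigr|-1 , \nonumber
\end{equation}
and then invokes Corollary~\ref{theorem3} to conclude that $-\det[H_m(\mathbf{r})]>0$ implies $Q^{\star}$ is nonpositive, whence $\sum_{i,j,k}|Q^{\star}_{i,j,k}|>1$ by normalization. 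The paper's route buys more: it shows the $\ell_1$ coherence \emph{equals} the total nonpositivity of the extended KD distribution (this identity is what underlies Fig.~\ref{fig1}), whereas your contrapositive yields only the qualitative implication. Conversely, your argument is shorter and makes transparent exactly which hypothesis does the work at each step. Both are complete proofs of the stated theorem.
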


\proof Consider $Q^{\star}(\rho)$ to be the extended KD distribution for a quantum state $\rho$ defined with respect to bases $\{\ket{a_i}\}$ and $\{\ket{b_k}\}$, where $\{\ket{b_k}\}$ is mutually unbiased with respect to $\{\ket{a_i}\}$. If the inequality $ - \det[H_{m}(\mathbf{r})] > 0$ holds for some $m \in \mathbb{N}$, then by Corollary~\ref{theorem3}, the extended KD distribution $Q^{\star}(\rho)$ must be nonpositive, implying that $ \sum_{i,j,k}|Q_{i,j,k}^{\star}(\rho)| > 1$. Now the $\ell_1$ norm of coherence is given by,
\begin{equation}
\begin{split}
    &  \mathcal{C}_{\ell_1} (\rho, \{\ket{a_i}\} ) = \sum_{i \neq j} |a_{ij}| \\
     &  = \sum_{i,j} |\bra{a_i} \rho \ket{a_j}| -1 \\
     & = \sum_{i,j,k} | \bra{a_j} \ket{b_k} \bra{b_k} \ket{a_i}       \bra{a_i} \rho \ket{a_j}| -1\\
     & = \sum_{i,j,k}|Q_{i,j,k}^{\star}(\rho)| - 1 \\
     & >0.
\end{split}
\end{equation}
This completes our proof. \qed

We now present an example validating Theorem~\ref{theorem4}, demonstrating how quantum coherence can be detected using our proposed moment criteria.

\begin{example}\label{example_coherence}
Let us assume a pure single qubit state, which can be written as 
 \begin{equation}
     \ket{\Psi} = \cos{\frac{\theta}{2}} \ket{0} + \sin{\frac{\theta}{2}} e^{i\alpha} \ket{1}
 \end{equation}
 where $0 \le \theta \le \pi$ and $0 \le \alpha \le 2\pi$.

 The $l_1$ norm of coherence of this pure state with respect to orthonormal basis $ \{\ket{a_i} \}=\{ \ket{0}, \ket{1}\}$ is equal to $|\sin{\theta}|$ \cite{budiyono2023quantifying}. 
 We consider another mutually unbiased basis $\{ \ket{b_k}\}$ with respect to $\{ \ket{a_i}\}$,
 \begin{align}
     \ket{b_1} &= \frac{1}{\sqrt{2}}( \ket{0} + e^{i\beta} \ket{1} ), \nonumber \\
     \ket{b_2} &= \frac{1}{\sqrt{2}}( \ket{0} - e^{i\beta} \ket{1} )
 \end{align}
 with $0 \le \beta\le 2\pi$. The corresponding extended KD distribution is,
 \begin{alignat}{3}
    Q^{\star}_{1,1,1} & =\; Q^{\star}_{1,1,2}  &\quad =\; & \frac{1}{2} \cos^2 \left( \frac{\theta}{2} \right), \nonumber \\
    Q^{\star}_{1,2,1} & =\; - Q^{\star}_{1,2,2} &\quad =\; & \frac{1}{4} e^{-i(\alpha - \beta)} \sin (\theta),\nonumber \\
    Q^{\star}_{2,1,1} & =\; - Q^{\star}_{2,1,2} &\quad =\; & \frac{1}{4} e^{-i(\alpha - \beta)} \sin (\theta), \nonumber\\
    Q^{\star}_{2,2,1} & =\; Q^{\star}_{2,2,2}  &\quad =\; & \frac{1}{2} \sin^2 \left( \frac{\theta}{2} \right). \nonumber
\end{alignat}
\end{example}

From Theorem~\ref{theorem4}, we know that if \( -\det[H_{m}(\mathbf{r})] > 0 \) for some \( m \in \mathbb{N} \), then the state exhibits non-zero coherence in the computational basis \( \{ \ket{0}, \ket{1} \} \), quantified by a non-zero \( l_1 \)-norm of coherence: \( \mathcal{C}_{l_1} (\rho, \{ \ket{0}, \ket{1} \}) = |\sin{\theta}| > 0 \), as illustrated in Fig.~\ref{fig1}. 

Although coherence in the basis \( \{ \ket{a_i} \} \) can be detected via the nonpositivity of the extended KD distribution using a mutually unbiased basis \( \{ \ket{b_k} \} \), it is important to note that the KD distribution itself depends on the choice of \( \{ \ket{b_k} \} \). Consequently, different parameter choices ({\it e.g.}, values of \( \alpha \) and \( \beta \)) may require the evaluation of determinants of Hankel matrices of different orders to detect coherence using our moment-based approach. In Fig.~\ref{fig1}, we consider two different pairs \( (\alpha, \beta) \) and plot the negative of the determinant corresponding to the {\it minimum} order of the Hankel matrix required to detect coherence across the full range of \( \theta \).

\begin{figure}
    \centering
    \includegraphics[width=8cm]{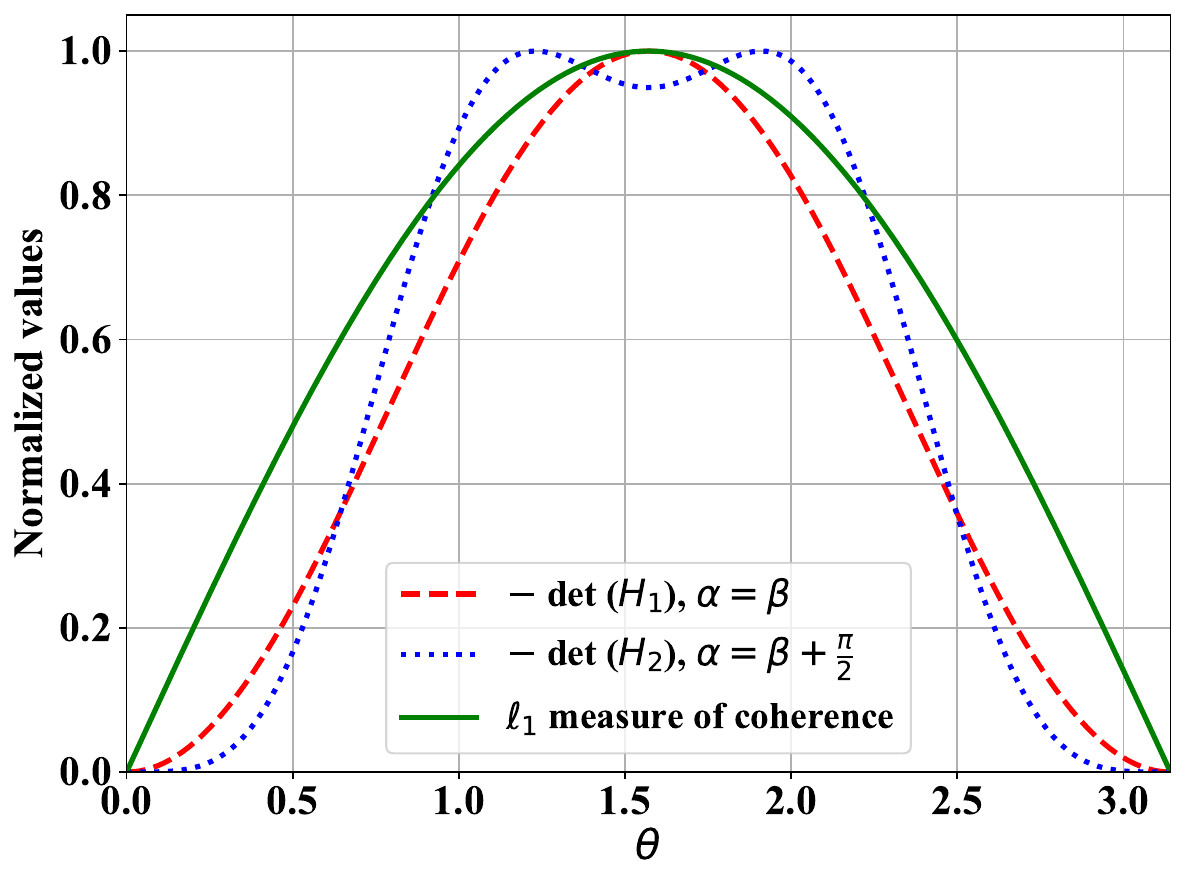}
\caption{
    \justifying 
    \small{
    The plot compares three quantities as functions of $\theta \in [0, \pi]$: 
    (i) the $\ell_1$ measure of coherence (solid green curve), which is exactly equal to the total nonpositivity of the extended KD distribution , 
    (ii) $-\det(H_1)$ for $\alpha = \beta$, (dashed red curve) showing positivity across the entire range (except $\theta =0,\pi$) and thereby detecting coherence in the entire range of $\theta$ at the first level of the hierarchy, and 
    (iii) $-\det(H_2)$ for $\alpha = \beta + \pi/2$, (dotted blue curve) whose positivity demonstrates detection of coherence at the second level of the hierarchy in Theorem \ref{theorem4}.}
    }
    \label{fig1}
\end{figure}

\subsection{Detecting nonclassical extractable work}
We now show the implications of our moment-based criteria in the extraction of nonclassical work.

The main focus of thermodynamics is to understand how the entropy or energy of a system changes due to the interaction with an external environment, such as a heat bath \cite{lostaglio2019introductory}. However, in most of the cases, the thermodynamic quantities of interest, such as heat and work, are exchanged in small amounts, often comparable to their averages, satisfying the fluctuation theorems in classical thermodynamics \cite{PhysRevE.56.5018,crooks1998nonequilibrium}.
One possible approach to address the work extraction protocol is the two-point measurement (TPM) scheme, outlined below \cite{ArvidssonShukur2020}.

Consider the evolution of a thermodynamic system $\rho \propto e^{-\beta \mathcal{H}(t_0)}$ under a Hamiltonian $\mathcal{H}(t)$ having spectral decomposition $\mathcal{H}(t) = \sum_i E_i(t) \Pi_i(t)$ and the energy eigen values $E_i(t)$. A measurement of energy at time $t=0$, gives the $i$th energy eigen value $E_i(t_0)$ and the state is updated to $\frac{\Pi_i(t_0)}{\Tr{\Pi_i(t_0)}}$. The system evolves from $t=0$ to $t=T$ under a unitary $U=e^{-i \int_{0}^{T} \mathcal{H}(t_0)dt/ \hbar}$ and a measurement of energy at $t=T$ results in an outcome $E_j(T)$ and the updated state is given by $\frac{\Pi_j(T)}{\Tr{\Pi_j(T)}}$. The joint probability of obtaining the energy values $E_i(t_0)$ and $E_j(T)$ is given by \cite{arvidsson2024properties}
\begin{equation}
    p_{ij}(\rho)=\Tr{{U^{\dagger}} \Pi_j(T) U \Pi_i(t_0) \rho \Pi_i(t_0)}. \label{jointprob}
\end{equation}

Microscopically, if energy $W$ is exchanged with probability $p(W)$, then the probability density is given by
\begin{equation}
    p(W) = \sum_{ij}p_{ij}(\rho)  \delta(W-[E_i(t_0)-E_j(T)])  \label{probdensity}
\end{equation}
where $\delta (\cdot)$ is the Dirac-$\delta$ function.
However, coherence in the state $\rho$ with respect to the initial Hamiltonian eigenbasis $\mathcal{H}(t_0)$ 
 results in a deviation of the marginal probabilities $\sum_{j} p_{ij} (\rho)$ from its expected value of $\Tr[\Pi_j(T) U \rho U^{\dagger} ]$. This is associated with the inherent disturbance caused by measurement due to the noncommutativity of $\rho$, $\mathcal{H}(t_0)$, and $\mathcal{H}(T)$. Nevertheless, this can be resolved by introducing quasi-probabilities, or KD elements $Q_{ij}$ such that 
\begin{equation}
    Q_{ij} (\rho) = \Tr[\Pi_{j}^{\mathcal{H}}(T)\Pi_{i}(t_0) \rho]  \label{quasiprob}
\end{equation}
where $\Pi_{j}^{\mathcal{H}}(T) = U^{\dagger} \Pi_{j}(T) U$ represents the evolution of the operator $\Pi_{j}(T)$ in the Heisenberg picture. As mentioned earlier, this quasi-probabilities $Q_{ij}(\rho)$ have the properties: $\sum_{ij} Q_{ij} (\rho) = 1$,  $\sum_{i} Q_{ij} (\rho) = \Tr[\Pi_{j}(T) U \rho U^{\dagger}]$ and $\sum_{j} Q_{ij} (\rho) = \Tr[\Pi_{i}(t_0) \rho]$. Using these quasi-probabilities, Eq.~\eqref{probdensity} gets modified to 
\begin{equation}
     \tilde{p}(W) = \sum_{ij}Q_{ij}(\rho)  \delta(W-[E_i(t_0)-E_j(T)]) . \label{quasiprobdensity}
\end{equation}

Further, A valid work distribution should yield an average work value: $\langle W \rangle := \Tr[\mathcal{H}(T) U \rho U^{\dagger}] -\Tr[\mathcal{H}(t_0) \rho]$. While the two-point measurement distribution does not always satisfy this condition {\it i.e.}, $\int p(W) dW \ne \langle W \rangle$, the KD-based distribution does: $\int \tilde{p}(W) dW = \langle W \rangle$. Notably, the KD average and the KD variance differs from their two-point-measurement average and the two-point-measurement variance respectively, whenever $\rho$ exhibits coherence in the energy basis. One may note that to characterize such work-distribution criteria, the real part of the KD distribution---{\it i.e.}, the  MHQ distribution \cite{Margenau61}---can serve as an indicator of nonclassicality, which is quantified by its negativity.

If $Q^{MHQ}_{ij} (\rho)$ are the elements of MHQ distribution, which is defined as in Eq.~\eqref{mhq} by considering the real parts of the KD elements, then the negativity measure associated with the distribution is given by 
\begin{equation}
    \mathcal{N} (Q^{MHQ}(\rho))= -1 + \sum_{ij} |Q^{MHQ}_{ij}(\rho)|.  \label{negativity}
\end{equation}

For a classical probability distribution, $\sum_{ij}|Q^{MHQ}(\rho)| = 1$ and hence, $\mathcal{N} (Q^{MHQ}(\rho)) = 0$. Therefore a value of $\mathcal{N} (Q^{MHQ} (\rho)) > 0$ indicates nonclassicality in the corresponding MHQ distribution, and this can be interpreted as a resource for enhanced work extraction \cite{PRXQuantum.1.010309,PhysRevResearch.6.023280}.

Let us now show the efficacy of our proposed moment-based criteria in detecting this nonclassical extractable work. For completeness, below we define the relevant moments and then present our moment-based criteria corresponding to the MHQ distribution. 
\begin{definition}
The $n$-th order moments corresponding to the MHQ distribution on a state $\rho$ are defined as 
\begin{equation}
    s_{n} = \sum_{ij} (Q^{MHQ}_{ij} (\rho))^n .\label{MHQmoments}
\end{equation}
\end{definition}
\begin{corollary} \label{theorem5}
 If the KD elements corresponding to the MHQ distribution for a quantum state $\rho$, denoted as $Q^{MHQ}_{ij}(\rho)$ are all positive, then, 
 \begin{equation}
     det[H_{m}(\mathbf{s})] \ge 0
 \end{equation}
 where $[H_{m}(\mathbf{s})]_{ij} = s_{i+j+1}$, for $i,j \in \{0,1,\dots ,m\}, \;m \in \mathbb{N}$ and $\mathbf{s} = (s_1, s_2,\dots,s_{2m+1})$ are the corresponding moments defined in Eq. \eqref{MHQmoments}.
\end{corollary}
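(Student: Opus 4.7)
The plan is to mirror, almost verbatim, the argument used in Theorem~\ref{theorem2}, exploiting the fact that the MHQ elements $Q^{MHQ}_{ij}(\rho) := \mathrm{Re}\,Q_{ij}(\rho)$ are real by construction and, under the hypothesis of the corollary, also nonnegative. Thus the MHQ distribution behaves, for the purposes of this moment-based argument, exactly like the positive KD distribution treated in Theorem~\ref{theorem2}, and the entire Hankel-matrix machinery carries over.

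Concretely, I would proceed in the following order. First, I would arrange the $d^2$ MHQ values $\{Q^{MHQ}_{ij}(\rho)\}$ along the diagonal of a $d^2\times d^2$ matrix $R^{MHQ}$, obtaining a diagonal positive semidefinite operator with spectrum $\{\chi_i\}_{i=1}^{d^2}$ satisfying $\chi_i\ge 0$ and $\sum_i\chi_i=1$ (the latter following from normalisation of the KD distribution, whose real part is also normalised to unity). Next, I would identify the moments as $s_n=\sum_i\chi_i^{\,n}$, so that the Hankel matrix $H_m(\mathbf{s})$ admits the Vandermonde-type factorisation
\begin{equation}
H_m(\mathbf{s}) \;=\; V_m\, D\, V_m^T,
\end{equation}
with $V_m$ the $(m{+}1)\times d^2$ matrix whose $(j,i)$ entry is $\chi_i^{\,j}$ and $D=\mathrm{diag}(\chi_1,\ldots,\chi_{d^2})$. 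Finally, for any $\boldsymbol{x}\in\mathbb{R}^{m+1}$ I would write $\boldsymbol{x}\,H_m(\mathbf{s})\,\boldsymbol{x}^T = \sum_{i=1}^{d^2}\chi_i\,y_i^{\,2}\ge 0$ with $\boldsymbol{y}=\boldsymbol{x}V_m$, concluding positive semidefiniteness of $H_m(\mathbf{s})$ and hence $\det[H_m(\mathbf{s})]\ge 0$.

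There is essentially no genuine obstacle here; the corollary is a straightforward transcription of Theorem~\ref{theorem2}, as already acknowledged in the parallel proof sketch given for Corollary~\ref{theorem3}. The only point requiring a brief justification is why the reality and nonnegativity of the $Q^{MHQ}_{ij}$ suffice to re-run the argument unchanged: reality is automatic from the definition $Q^{MHQ}_{ij}=\mathrm{Re}\,Q_{ij}$, and nonnegativity is the hypothesis, which is exactly what guarantees that the diagonal matrix $R^{MHQ}$ is positive semidefinite so that the spectral decomposition used in the Vandermonde factorisation is available. I would therefore simply state the proof as a direct adaptation, indicating that $Q$ and $q_n$ are replaced by $Q^{MHQ}$ and $s_n$ respectively, and omit the repetitive algebra.
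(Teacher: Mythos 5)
Your proposal is correct and is exactly the argument the paper intends: the corollary is an immediate transcription of Theorem~\ref{theorem2} with $Q$ and $q_n$ replaced by $Q^{MHQ}$ and $s_n$, which is why the paper states it without a separate proof (mirroring the remark given for Corollary~\ref{theorem3}). Your added justification that reality and nonnegativity of the $Q^{MHQ}_{ij}$ are all that the diagonal/Vandermonde construction requires is accurate and fills in the only detail worth checking.
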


Now, for $det(H_2) < 0$ , we have $\mathcal{N}(Q^{MHQ}(\rho)) > 0$, indicating the potentiality of our criteria in detecting nonclassical work extraction. This is illustrated below through an explicit example taken from \cite{PRXQuantum.5.030201}.
\begin{example}
Consider the detection of nonclassical work exerted by a qubit. The qubit undergoes unitary evolution ($U$) under a time-dependent Hamiltonian without coupling to any external bath, so that the full internal energy change can be interpreted as work.
Further, consider a time-dependent qubit Hamiltonian of the form:
\begin{equation}
{\mathcal H}(t)=\frac{1}{2} \left[ \Omega \left( \cos\omega t \, \sigma^x + \sin\omega t \, \sigma^y \right) + \omega\sigma^z \right],    
\end{equation}
which physically represents a spin-1/2 particle subject to an effective magnetic field rotating about the \( z \)-axis. Transforming to the rotating frame via the unitary \( U_{\rm rot} = e^{i \omega \sigma^z t/2} \), the Hamiltonian becomes time-independent:
\begin{equation}
    \tilde {\mathcal H} = U_{\rm rot} {\mathcal H} U_{\rm rot}^\dagger +i \, \dot U_{\rm rot} U_{\rm rot}^\dagger = \frac{\Omega}{2} \sigma^x \,,
\end{equation}
so that the total unitary evolution operator becomes
\begin{equation}\label{eq:example_qubit-unitary}
    U(t)=e^{-i \omega \sigma^z t/2}e^{-i \Omega \sigma^x t/2}.
\end{equation} 

To compute the work distribution between \( t= 0 \) and \( t = T \), we use the spectral decomposition of the Hamiltonian:
\begin{align}
    {\mathcal H}(t) = \sum_{\gamma=0,1} \; E_\gamma \Pi_\gamma(t),
 \;  \;  \text{with } \; E_\gamma =\frac{ (-1)^{\gamma+1} \Delta }{2} \label{eq:qubit_energies}
\end{align}
and \( \Delta = \sqrt{\omega^2 + \Omega^2} \). 
The corresponding eigenvectors are given by,
\begin{align}
\Pi_\gamma(t) = \frac{\mathbb{I}}{2} + (-1)^{(\gamma+1)} \ \frac{\Omega(\sigma^x\cos\omega t+\sigma^y\sin\omega t)+\omega\sigma^z}{2\Delta}.\label{eq:qubit_projectors}
\end{align}

The initial state of the qubit is taken to be coherent in the eigenbasis of \( \mathcal{H}(t_0) \), and is parametrized as:
\begin{equation}\label{eq:initial_state_qubit}
    \rho=\begin{pmatrix}
        \Gamma & \xi \\ 
        \xi & 1-\Gamma
    \end{pmatrix},
\end{equation}
where \( 0 \leq \Gamma \leq 1 \) represents the ground state population, and \( \xi \) encodes coherence.

\begin{figure}[ht]
    \centering
    \includegraphics[width=8cm]{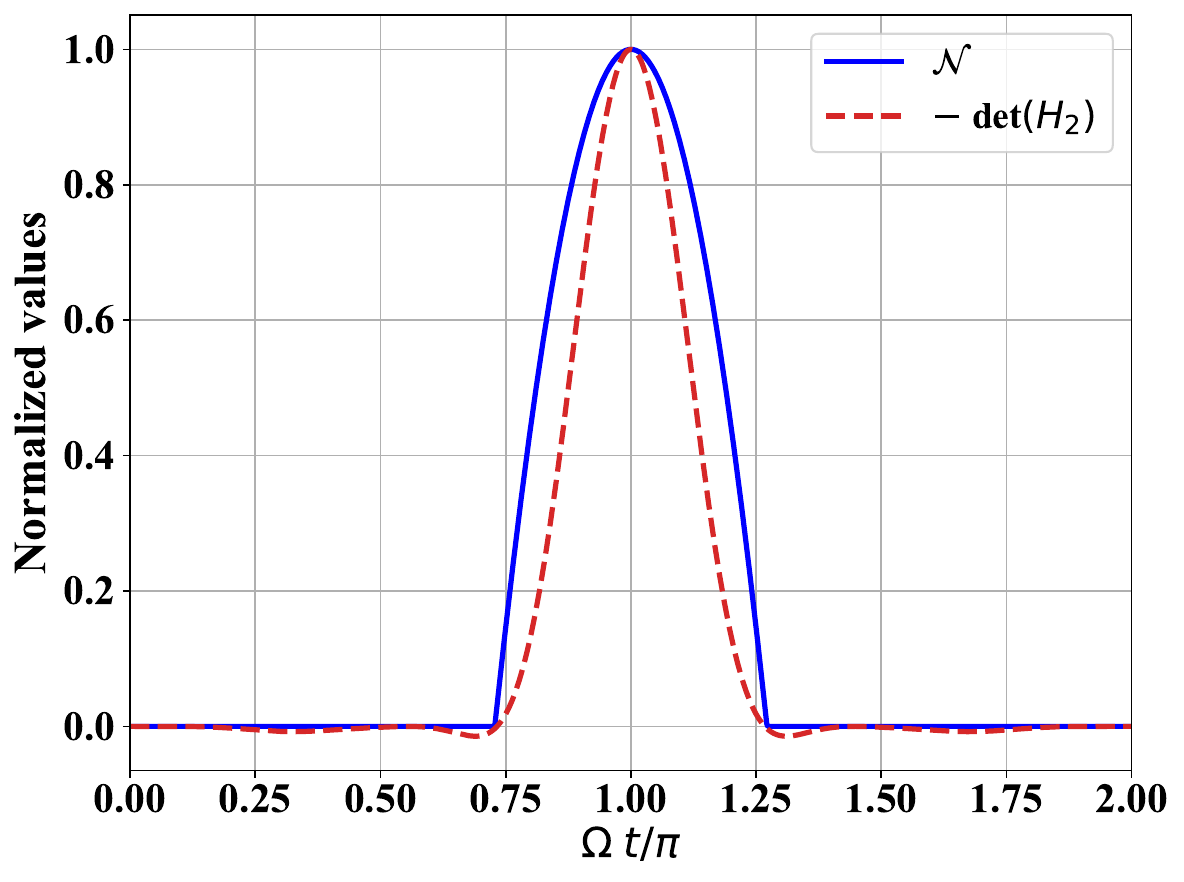}
    \caption{
        \justifying
        \small{
        Plot of the nonpositivity measure \( \mathcal{N}(Q^{MHQ}_{ij}(\rho)) \) as a function of \( \Omega \) (solid blue curve). The regions where \( \mathcal{N} > 0 \) correspond to intervals where nonclassical work extraction is possible. The nonclassical domain is completely identified via Corollary~\ref{theorem5}, based on the determinant of a second-order Hankel matrix constructed from KD moments (dashed red curve).
        }
    }
    \label{fig2}
\end{figure}

Here, we focus on the MHQ distribution. To investigate this feature explicitly, we consider a maximally coherent initial state by choosing \( \Gamma = \xi = 1/2 \). The corresponding MHQ distribution reads:

\begin{eqnarray}
   Q^{MHQ}_{00} &=& \frac{\omega^2 - \omega \Omega + 2\Omega^2 + \omega(\omega + \Omega)\cos\Omega t}{4 \Delta^2}, \nonumber
   \label{eq:Req00}
   \\
   Q^{MHQ}_{01} &=& \frac{\omega(\omega + \Omega)(1 - \cos\Omega t)}{4 \Delta^2}, \nonumber
   \\
   Q^{MHQ}_{10} &=& \frac{\omega(\omega- \Omega)(1 - \cos\Omega t)}{4 \Delta^2}, \nonumber
   \\
   Q^{MHQ}_{11} &=& \frac{\omega^2 + \omega\Omega + 2\Omega^2 + \omega(\omega - \Omega)\cos\Omega t}{4 \Delta^2}. \nonumber
   \label{eq:Req11}
\end{eqnarray}
\end{example}

Now, the MHQ work distribution (following Eq.~\eqref{quasiprobdensity}), is given by:
\begin{eqnarray}
    \tilde{p}(W) &=& \left( Q^{MHQ}_{00} + Q^{MHQ}_{11} \right)\delta(W)+
    \nonumber\\
    &+& Q^{MHQ}_{10}\delta(W+\Delta) + Q^{MHQ}_{01}\delta(W-\Delta).
\end{eqnarray}

As mentioned previously, $\mathcal{N}(Q^{MHQ}(\rho)) >0$ indicates the enhanced work extraction. This behavior is illustrated in Fig.~\ref{fig2}, where the nonpositive regions are clearly shown. Notably, the application of Corollary~\ref{theorem5} using a second-order Hankel matrix captures the full extent of the nonclassical regime, validating our criteria.
\section{Proposal for realization of KD moments using shadow tomography}\label{s5}
In Sec.~\ref{s3}, we have established that the positivity of the KD distribution implies the positivity of the associated Hankel matrices of moments. Consequently, a negative determinant in any of these Hankel matrices serves as a sufficient indicator of nonpositivity. A natural question then arises: \emph{how can one access such KD moments in an experimental setting?} In this section, we discuss an efficient approach based on the framework of shadow tomography.

Shadow tomography, originally introduced by Aaronson~\cite{aaronson2018shadow}, provides a method to estimate a very large (even exponential) number of target functions of a quantum state using only a polynomial number of copies of the state. While highly sample-efficient, Aaronson’s protocol places heavy demands on quantum hardware: in principle, it requires collective operations on all stored copies of the state, implemented through exponentially long quantum circuits. A more practical approach was later developed in the form of \emph{classical shadows}, as introduced by Huang et. al.~\cite{huang2020predicting} with the same objective of predicting the target functions, and not the full state.  Using classical shadows, only $O (\log M)$ measurements are required to accurately estimate $M$ different target functions of the state with high success probability \cite{huang2020predicting}.

A classical shadow $(\hat{\rho})$ of an unknown quantum state $(\rho)$ is created by repeatedly performing a simple procedure: a random unitary is first applied $(\rho \rightarrow U\rho U^\dagger)$, rotating the state, where the transformation $U$ is randomly selected from an fixed ensemble of unitaries, with different ensembles leading to different versions of the procedure; then a computational basis measurement is performed, followed by classical post processing. In each run, a classical snapshot of $\rho$ is obtained, which we write as $\hat{\rho}$. The number of times this procedure is repeated is called the `size' of the classical shadow. Finally, the classical shadow of size $N$ of an unknown quantum state $\rho$ is written as,
\begin{equation}
    S(\rho; N) = \{ \hat{\rho}_1, \hat{\rho}_2, \dots, \hat{\rho}_N \}
\end{equation}
The classical snapshots exactly reproduce the underlying state in expectation over the unitaries and measurement outcomes: $\mathbb{E} (\hat{\rho}) = \rho$. From such shadows, one can efficiently predict many different linear and nonlinear functions with high probability, without the need for full tomography of $\rho$. To connect this framework to KD moments, we note that the $n$-th KD moment defined in \eqref{kdmoments} can be expressed as,
\begin{align}
    q_n &= \sum_{i,j} \left[Q_{ij} (\rho)\right]^n \nonumber \\
    &= \sum_{i,j} \left[ \mathrm{Tr}\left[\Pi_j^f \Pi_i^a \, \rho \right] \right]^n \nonumber \\
     &= \sum_{i,j} \mathrm{Tr}\left[\left(\Pi_j^f \Pi_i^a \right)^{\otimes n} \, \rho^{\otimes n}\right] \nonumber \\
     &= \mathrm{Tr} \left[ \left( \sum_{i,j} \left(\Pi_j^f \Pi_i^a \right)^{\otimes n} \right)  \rho^{\otimes n} \right] \nonumber \\
     & = \mathrm{Tr}\left[ B_n \rho^{\otimes n} \right] \;\; \;\text{where } B_n = \sum_{i,j} \left(\Pi_j^f \Pi_i^a \right)^{\otimes n}
\end{align}
Here $\Pi_j^f$ and $\Pi_i^a$ are the projectors associated with two measurement bases $\{ \ket{a_i} \}$ and $\{ \ket{f_j} \}$, respectively. The operator $B_n$ acts on $n$ copies of the state, making $q_n$ a nonlinear function of $\rho$.

It is important to emphasize that, for a fixed quantum state, different choices of measurement bases generally yield different KD distributions, all representing the same state in distinct quasiprobability frames. To eliminate this ambiguity, we focus on the operationally relevant case where the measurement bases are predetermined and known, while the state itself is not. Given the known bases, the associated projectors are also fixed. Thus, the construction of the operator $B_n$ constitutes a classical preprocessing step carried out before any experimental run. During the actual experiment, the experimentalist only needs to implement the unitary operation corresponding to $B_n$ and estimate its expectation value from the classical shadows, without requiring explicit access to each individual projector.

As an example, we illustrate the procedure for the realization of the second order KD moment, which can be written as,
\begin{align}
    q_2 &= \mathrm{Tr}[B_2 \rho^{\otimes 2}] \nonumber \\
    &= \mathrm{Tr}[B_2 (\mathbb{E}\hat{\rho}_i) \otimes (\mathbb{E}\hat{\rho}_j)] \nonumber\\
    &= \mathbb{E} (\mathrm{Tr}[B_2 \hat{\rho}_i \otimes \hat{\rho}_j])
\end{align}
$q_2$ is a quadratic function where $B_2$ acts on two copies of the state. Thus, one can predict $q_2$ using two independent snapshots $\hat{\rho}_i, \hat{\rho}_j, i\neq j$, and $\mathrm{Tr}[B_2 \hat{\rho}_i \otimes \hat{\rho}_j]$ correctly predicts $q_2$ \cite{huang2020predicting}. Similarly, any higher order KD moment $q_n$ can be predicted with expectation of $n$ independent classical snapshots.

This approach offers two key advantages. Firstly, the number of required measurements scales only polylogarithmically in the number of observables being estimated, making it more efficient than brute-force tomography when the task is solely to detect KD nonpositivity\cite{huang2020predicting}. Secondly, since the KD moments are global quantities built from all $Q_{ij}$, shadow tomography provides a natural way to access them without explicitly reconstructing the full KD distribution. Thus, by applying shadow tomography techniques, one can experimentally detect nonpositivity of the KD distribution, and thereby detect quantum resources inherently linked to it, as described in Sec.~\ref{s4}, using only a modest number of measurements.

\section{Conclusions}\label{s6}
Kirkwood-Dirac (KD) quasiprobability distribution has gained renewed attention as a versatile framework finding applications across a wide spectrum of quantum information tasks, including the analysis of weak measurements \cite{Dressel15, Yunger18}, translational asymmetry \cite{Budiyono_2023_asymmetry, budiyono_2023_operational}, quantum metrology \cite{ArvidssonShukur2020}, quantum thermodynamics \cite{lostaglio2020certifying,Upadhyaya24}, as well as quantum foundations \cite{kunjwal2019anomalous, pusey2014anomalous, dressel2011experimental}. Efficiently detecting their nonclassical features is therefore essential, both theoretically and experimentally. In this work, we introduced a moment-based approach to detect nonpositivity of the KD distribution. In many scenarios, the first three moments suffice to reveal KD nonpositivity; however, we also provided examples where higher-order moments are necessary for the detection of KD nonpositivity. Moreover, we extended this moment-based framework to detect quantum resources---such as quantum coherence and nonclassical extractable work---that are fundamentally linked to the KD distribution, thereby illustrating the broader utility of our approach. Finally, we proposed a methodology based on shadow tomography for efficiently computing KD moments in real experiments, thereby demonstrating the practical feasibility of our protocol.

While several existing works have explored the detection of nonpositive KD distributions \cite{gonzalez2019out,PhysRevA.109.012215,PhysRevA.76.012119,PhysRevLett.110.230602,PhysRevLett.110.230601,lundeen2011direct,PhysRevLett.108.070402,buscemi2013direct,tan2024}, our approach offers a distinct advantage in terms of practical implementation. Unlike witness-based \cite{langrenez2024convex} and cloning-based approaches \cite{buscemi2013direct}, moment-based detection scheme does not require prior knowledge of the distribution and is significantly more resource-efficient. Moreover, compared to PVM-based approaches \cite{budiyono2024sufficient} that require optimization over the full set of bases and weak-value–based schemes that become challenging for large or noisy systems, our moment-based approach based on shadow tomography involves only simple functionals and a polylogarithmic number of state copies, ensuring both efficiency and experimental feasibility. Other existing methods to detect KD nonpositivity often rely on full state tomography or direct reconstruction \cite{PhysRevLett.110.230602, PhysRevLett.110.230601,lundeen2011direct,PhysRevLett.108.070402}, which require estimating the entire KD distribution of a quantum state in finite dimension $d$ to precision $\epsilon$, using $O(d^4/{{\epsilon}^2})$ samples. The quantum circuit introduced in \cite{wagner2024quantum} offers an improvement in sample complexity compared to full state tomography, reducing the requirement to $O (d^2/{{\epsilon}^2})$ samples. In contrast, our approach for detecting KD nonpositivity is fundamentally based on shadow tomography \cite{PhysRevLett.125.200501,aaronson2018shadow, huang2020predicting}, which enables the estimation of the desired functions using only a polylogarithmic number of samples. 

Apart from this, our study opens up several other promising directions for future research. While we have focused on detecting nonclassicality via negativity in quasiprobability distributions, nonclassical features can also arise in positive distributions, such as the Husimi Q-function \cite{husimi1940some}. Therefore, a potential direction is to develop moment-based criteria capable of detecting nonclassicality for such distributions. Moreover, considering the practical feasibility of our proposed moment-based approach, a key future direction is its realization within real experimental frameworks.

\section{Acknowledgements}\label{s7}
We thank Nirman Ganguly for useful discussions and insightful comments. B.M. acknowledges the DST INSPIRE fellowship program for financial support. AGM acknowledges the financial support through the National Quantum Mission (NQM) of the Department of Science and Technology, Government of India.
\bibliography{main}

\end{document}